\documentclass[%
 reprint,
 amsmath,amssymb,
 aps,
]{revtex4-2}

\usepackage{graphicx}% Include figure files
\usepackage{dcolumn}% Align table columns on decimal point
\usepackage{bm}% bold math
\usepackage{amsthm}% theorem environments
\usepackage{mathtools}
\usepackage{dsfont}
\usepackage{hyperref}% hypertext capabilities

\newtheorem{lemma}{Lemma}
\newtheorem{proposition}{Proposition}
\newtheorem{corollary}{Corollary}
\theoremstyle{definition}

\newtheorem{construction}{Construction}
\newtheorem{remark}{Remark}

\newcommand{\id}{\mathds{1}}
\newcommand{\Hil}{\mathcal{H}}
\newcommand{\Lin}{\mathcal{L}}
\newcommand{\C}{\mathbb{C}}

\newcommand{\N}{\mathbb{N}}
\newcommand{\Z}{\mathbb{Z}}
\newcommand{\Tr}{\operatorname{tr}}
\newcommand{\dg}{^{\dagger}}
\newcommand{\inner}[2]{\langle #1,\,#2\rangle}
\newcommand{\chconj}{\chi_{\mathrm{conj}}}
\newcommand{\Mconj}{\mathcal{M}}
\newcommand{\Bell}{\Phi^{+}}
\newcommand{\sigmax}{\sigma_x}
\newcommand{\sigmay}{\sigma_y}
\newcommand{\sigmaz}{\sigma_z}

\begin{document}

\preprint{APS/123-QED}

\title{Quantum Realizability of Probabilistic Theories Stable under Teleportation}

\author{Miguel A.\ A.\ Lisboa}
 \email{miguel.lisboa@venturus.org.br}
 \affiliation{%
  Universidade Federal da Para\'iba, Centro de Inform\'atica,
  Jo\~ao Pessoa, Brazil
 }%
 \affiliation{%
  Venturus, Centro de Excel\^encia em Computa\c{c}\~ao Qu\^antica,
  Campinas, Brazil
 }%

\date{\today}

\begin{abstract}
The classification of general probabilistic theories (GPTs) whose CHSH value is stable under arbitrary rounds of teleportation and entanglement swapping was obtained by Dmello and Gross~\cite{DmelloGross2026} and consists of seven families, indexed by characters of the Klein four-group $K_4$, the cyclic group $\Z_4$, and the dihedral group $D_4$. The question of which of these families admits a realization within standard quantum mechanics was left open. In this work we resolve this question completely. Using elementary representation theory, we prove that exactly two families are quantum-realizable, namely $\chi^{K_4}_{1234}$ and $\chi^{D_4}_{125}$, while the remaining five admit no quantum realization.
\end{abstract}

\maketitle

%-----------------------------------------------------------------------
\section{\label{sec:intro}Introduction}
%-----------------------------------------------------------------------

The framework of general probabilistic theories (GPTs) provides an operational description of physics that subsumes both classical and quantum theory and allows one to study the structural origin of quantum features such as entanglement, no-cloning~\cite{Wootters1982}, and information-theoretic protocols~\cite{Barrett2007,Chiribella2011,Chiribella2010,Plavala2023,Hardy2001,Masanes2011,JanottaHinrichsen2014,BarrnumWilce2016,Spekkens2007}. A central problem in this program is to identify operational principles that single out quantum theory among the broader landscape of GPTs. Quantum teleportation~\cite{Bennett1993,Nielsen2000} and entanglement swapping~\cite{Zukowski1993,Pan1998,Bose1998} provide a particularly natural setting for this question: requiring that the CHSH value~\cite{CHSH1969,Bell1964,Tsirelson1980,Brunner2014} of a bipartite system remain invariant under arbitrary rounds of teleportation and entanglement swapping is a highly non-trivial structural constraint, and the original work of Weilenmann and Colbeck~\cite{WeilenmannColbeck2020a,WeilenmannColbeck2020b} conjectured that such stability could potentially serve as an operational characterization of quantum mechanics.

In Ref.~\cite{DmelloLighthart2024}, the authors gave a first counter-example to this conjecture by exhibiting a post-quantum GPT~\cite{Popescu1994}, called oblate stabilizer theory, which sustains the algebraic maximum CHSH value of $4$ under arbitrary iterated entanglement swapping. Building on these results, Dmello and Gross~\cite{DmelloGross2026} subsequently classified all GPTs whose CHSH value is stable in this sense. Under mild technical hypotheses they showed that any such effective GPT must belong to one of seven inequivalent families, characterized by the irreducible-decomposition data of a representation of the correction group acting on the underlying state space. These families are labeled by characters of three finite groups: the cyclic group $\Z_4$, the Klein four-group $K_4$, and the dihedral group $D_4$ of order $8$. The seven characters are
\begin{equation}
\label{eq:seven-families}
\chi^{K_4}_{1234},\;\chi^{\Z_4}_{1234},\;\chi^{D_4}_{125},\;\chi^{D_4}_{135},\;\chi^{D_4}_{145},\;\chi^{D_4}_{12345},\;\chi^{D_4}_{123452}.
\end{equation}
The first two are the regular characters of $K_4$ and $\Z_4$ respectively; the remaining five are characters of $D_4$ specified by the multiplicities of its five irreducible representations.

The question, posed in Ref.~\cite{DmelloGross2026} and resolved here, is the following. Among these seven abstract symmetry types, which can be physically realized within quantum theory?

We make precise what is meant by ``quantum-realizable.'' A quantum realization of a family in Eq.~\eqref{eq:seven-families} consists of a finite-dimensional Hilbert space $\Hil$ and a (possibly projective) unitary representation $U:G\to\mathrm{U}(\Hil)$ of the relevant correction group such that the conjugation character $\chconj$ of the action $\Mconj_g(X)=U_gXU_g\dg$ on the full algebra $\Lin(\Hil)$ equals the prescribed family. The problem is therefore one of projective-unitary realizability, and the paper proves exactly the classification of which families are projective-unitarily realizable in this sense.

The main result of this paper is a complete classification: exactly two of the seven families admit a quantum realization, namely $\chi^{K_4}_{1234}$ and $\chi^{D_4}_{125}$, while the remaining five do not. The proof is uniform and entirely representation-theoretic. The fundamental input is the identity $\chconj=|\chi_U|^2$ relating the conjugation character to the squared modulus of the underlying unitary character. Combined with the constraint that the trivial character occurs with multiplicity one in $\chconj$~\cite[Thm.~11]{DmelloGross2026}, this forces $\chi_U$ to be irreducible, and the analysis then reduces to checking a small number of arithmetic identities on character tables. We complement the impossibility results with explicit quantum constructions of the two realizable families, expressed in operational language as the standard Bell-basis teleportation protocol~\cite{Bennett1993} and a POVM-based entanglement-swapping protocol~\cite{DmelloLighthart2024}.

%-----------------------------------------------------------------------
\section{\label{sec:prelim}Notation and Background}
%-----------------------------------------------------------------------

Throughout, $G$ denotes a finite group and $\Hil$ a finite-dimensional complex Hilbert space. We write $\Lin(\Hil)$ for the algebra of linear operators on $\Hil$, equipped with the Hilbert--Schmidt inner product
\begin{equation}
\langle X,Y\rangle_{\mathrm{HS}}:=\Tr(X\dg Y),\qquad X,Y\in\Lin(\Hil).
\end{equation}
If $\dim\Hil=n$, then $\Lin(\Hil)\cong M_n(\C)$ as a vector space, and $\dim\Lin(\Hil)=n^2$.

A unitary representation of $G$ on $\Hil$ is a group homomorphism $U:G\to\mathrm{U}(\Hil)$, $g\mapsto U_g$, where $\mathrm{U}(\Hil)$ denotes the group of unitary operators on $\Hil$. The character of $U$ is the function
\begin{equation}
\chi_U:G\to\C,\qquad\chi_U(g):=\Tr(U_g).
\end{equation}
Characters are class functions, meaning $\chi_U(hgh^{-1})=\chi_U(g)$ for all $g,h\in G$. The standard inner product on class functions of $G$ is
\begin{equation}
\inner{\chi}{\chi'}:=\frac{1}{|G|}\sum_{g\in G}\chi(g)^{*}\chi'(g).
\end{equation}
The irreducible characters $\chi_1,\dots,\chi_k$ of $G$ form an orthonormal basis for the space of class functions with respect to this inner product, by the Schur orthogonality relations~\cite{Serre1977,Isaacs1976,JamesLiebeck2001}:
\begin{equation}
\inner{\chi_i}{\chi_j}=\delta_{ij}.
\end{equation}
Every character decomposes as $\chi_U=\sum_{i}n_i\chi_i$ with $n_i=\inner{\chi_i}{\chi_U}\in\N_0$, and the representation is irreducible if and only if $\inner{\chi_U}{\chi_U}=1$.

\subsection{The conjugation representation}

Given a unitary representation $U$ of $G$ on $\Hil$, the \emph{conjugation representation} is the representation of $G$ on $\Lin(\Hil)$ defined by
\begin{equation}
\Mconj_g(X):=U_gXU_g\dg,\qquad X\in\Lin(\Hil),\;g\in G.
\end{equation}
This is indeed a representation: $\Mconj_e(X)=U_eXU_e\dg=\id X\id=X$, and
\begin{equation}
\Mconj_g\circ\Mconj_h(X)=U_gU_hXU_h\dg U_g\dg=U_{gh}XU_{gh}\dg=\Mconj_{gh}(X)
\end{equation}
by the homomorphism property. Each $\Mconj_g$ is unitary with respect to the Hilbert--Schmidt inner product:
\begin{align}
\langle\Mconj_g(X),\Mconj_g(Y)\rangle_{\mathrm{HS}}
  &=\Tr\!\bigl((U_gXU_g\dg)\dg\,U_gYU_g\dg\bigr)\notag\\
  &=\Tr\!\bigl(U_gX\dg U_g\dg U_gYU_g\dg\bigr)\notag\\
  &=\Tr\!\bigl(U_gX\dg YU_g\dg\bigr)\notag\\
  &=\Tr(X\dg Y)
   =\langle X,Y\rangle_{\mathrm{HS}},
\end{align}
where we used $U_g\dg U_g=\id$ and the cyclicity of the trace. The character of $\Mconj$ is denoted $\chconj$.

\subsection{Projective representations and the Schur cover}

A projective unitary representation of $G$ on $\Hil$ is a map $U:G\to\mathrm{U}(\Hil)$ satisfying $U_gU_h=\omega(g,h)U_{gh}$ for some scalar $\omega(g,h)\in\mathrm{U}(1)$. The function $\omega:G\times G\to\mathrm{U}(1)$ is a $2$-cocycle, called the multiplier of $U$. Projective representations with the same multiplier form a projective class; two multipliers are equivalent if they differ by a coboundary. Equivalence classes of multipliers are classified by the second cohomology group $H^2(G,\mathrm{U}(1))$.

Note that the conjugation action $X\mapsto U_gXU_g\dg$ is unchanged if $U_g$ is replaced by $e^{i\theta}U_g$, so the conjugation representation $\Mconj$ is well-defined for projective representations.

For $G=D_4$, the cohomology group is $H^2(D_4,\mathrm{U}(1))\cong\Z_2$~\cite{Karpilovsky1985,Brown1982}, so there are exactly two projective classes. In the trivial class, $U$ is an ordinary linear representation. In the non-trivial class, $U$ lifts to an ordinary representation of the Schur cover of $D_4$. The Schur cover is a central extension
\begin{equation}
1\to\Z_2\to\widetilde{G}\to D_4\to 1
\end{equation}
in which the central $\Z_2$ generator $z$ acts on $\Hil$ by $U_z=-\id$. We may take $\widetilde{G}$ to be the dihedral group $D_8$ of order $16$.

\subsection{Groups and their character tables}

We fix presentations
\begin{align}
K_4&=\{e,a,b,ab\},\quad a^2=b^2=(ab)^2=e,\\
\Z_4&=\langle t\mid t^4=e\rangle,\\
D_4&=\langle r,s\mid r^4=s^2=e,\;srs^{-1}=r^{-1}\rangle,\\
D_8&=\langle\zeta,\eta\mid\zeta^8=\eta^2=e,\;\eta\zeta^k\eta^{-1}=\zeta^{-k}\rangle.
\end{align}

The conjugacy classes of $D_4$, their representatives, and class sizes are listed in Table~\ref{tab:D4classes}.%
\begin{table}[b]
\caption{\label{tab:D4classes}Conjugacy classes of $D_4$.}
\begin{ruledtabular}
\begin{tabular}{ccc}
Class & Representative & Size\\
\colrule
$K_1$     & $e$   & $1$\\
$K_{r^2}$ & $r^2$ & $1$\\
$K_r$     & $r$   & $2$\\
$K_s$     & $s$   & $2$\\
$K_{rs}$  & $rs$  & $2$\\
\end{tabular}
\end{ruledtabular}
\end{table}

These classes are obtained directly from the defining relation, which gives $sr^k=r^{-k}s$ and equivalently $rs=sr^{-1}$. Since $srs^{-1}=r^{-1}=r^3$, we have $K_r=\{r,r^3\}$. For the reflections, $rsr^{-1}=(sr^{-1})r^{-1}=sr^{-2}=r^2 s$, so $K_s=\{s,r^2s\}$. Similarly, $r(rs)r^{-1}=r^2 s r^{-1}=r^2(sr^{-1})=r^2\cdot rs=r^3 s$, so $K_{rs}=\{rs,r^3s\}$. The five irreducible characters of $D_4$ at the representatives $(e,r,r^2,s,rs)$ are
\begin{align}
\chi_1&=(1,1,1,1,1),\\
\chi_2&=(1,1,1,-1,-1),\\
\chi_3&=(1,-1,1,1,-1),\\
\chi_4&=(1,-1,1,-1,1),\\
\chi_5&=(2,0,-2,0,0).
\end{align}
Here $\chi_1,\dots,\chi_4$ are the four one-dimensional characters, and $\chi_5$ is the unique two-dimensional irreducible. The dimensions satisfy $1^2+1^2+1^2+1^2+2^2=8=|D_4|$, which confirms this is the complete list. The regular character of $D_4$ has value $|D_4|=8$ at $e$ and $0$ elsewhere; it decomposes as $\sum_i\chi_i(e)\chi_i=\chi_1+\chi_2+\chi_3+\chi_4+2\chi_5$.

For $K_4=\Z_2\times\Z_2$, all four irreducible characters are one-dimensional. Writing the elements as $\{e,a,b,ab\}$ with $a^2=b^2=(ab)^2=e$, the four characters are
\begin{align}
\chi^{K_4}_1&=(1,1,1,1),\\
\chi^{K_4}_2&=(1,-1,1,-1),\\
\chi^{K_4}_3&=(1,1,-1,-1),\\
\chi^{K_4}_4&=(1,-1,-1,1),
\end{align}
at representatives $(e,a,b,ab)$. The regular character of $K_4$ is $\chi^{K_4}_{1234}:=\chi^{K_4}_1+\chi^{K_4}_2+\chi^{K_4}_3+\chi^{K_4}_4$, which takes value $4$ at $e$ and $0$ elsewhere.

For $\Z_4=\{e,t,t^2,t^3\}$, the four irreducible characters are $\chi^{\Z_4}_k(t^j)=i^{kj}$ for $k,j\in\{0,1,2,3\}$. The regular character of $\Z_4$ is $\chi^{\Z_4}_{1234}:=\sum_{k=0}^3\chi^{\Z_4}_k$, taking value $4$ at $e$ and $0$ elsewhere.

The conjugacy classes of $D_8$, at representatives $(e,\zeta,\zeta^2,\zeta^3,\zeta^4,\eta,\zeta\eta)$, have sizes $(1,2,2,2,1,4,4)$. The two-dimensional irreducible characters of $D_8$ that do not factor through the quotient $D_8/\langle\zeta^4\rangle\cong D_4$ are
\begin{align}
\chi_{E_1}&=(2,\sqrt{2},0,-\sqrt{2},-2,0,0),\\
\chi_{E_3}&=(2,-\sqrt{2},0,\sqrt{2},-2,0,0),
\end{align}
and a third two-dimensional character
\begin{equation}
\chi_{E_2}=(2,0,-2,0,2,0,0)
\end{equation}
does factor through the quotient and therefore belongs to the trivial projective class of $D_4$. The characters $\chi_{E_1}$ and $\chi_{E_3}$ correspond to irreducible projective representations of $D_4$ in the non-trivial cohomology class, because $\chi_{E_1}(\zeta^4)=\chi_{E_3}(\zeta^4)=-2=-\chi_{E_1}(e)$, reflecting the fact that the central element $\zeta^4$ acts as $-\id$.

\subsection{The families and their dimensions}

The seven families in Eq.~\eqref{eq:seven-families} are characters of the conjugation representation, hence they must equal $\chconj$ for some unitary representation $U$. Their dimensions (values at the group identity) are recorded in Table~\ref{tab:dimensions}.%
\begin{table}[b]
\caption{\label{tab:dimensions}Dimensions of the seven families. The subscript lists the irreducible characters included; the trailing $2$ in $\chi^{D_4}_{123452}$ means $\chi_5$ appears with multiplicity $2$.}
\begin{ruledtabular}
\begin{tabular}{lc}
Family & $\dim = \chconj(e)$\\
\colrule
$\chi^{K_4}_{1234}$   & $4$ \\
$\chi^{\Z_4}_{1234}$  & $4$ \\
$\chi^{D_4}_{125}$    & $1+1+2=4$ \\
$\chi^{D_4}_{135}$    & $1+1+2=4$ \\
$\chi^{D_4}_{145}$    & $1+1+2=4$ \\
$\chi^{D_4}_{12345}$  & $1+1+1+1+2=6$ \\
$\chi^{D_4}_{123452}$ & $1+1+1+1+2{\cdot}2=8$ \\
\end{tabular}
\end{ruledtabular}
\end{table}

%-----------------------------------------------------------------------
\section{\label{sec:conj-character}The Conjugation Character and Irreducibility}
%-----------------------------------------------------------------------

We establish the fundamental identity relating $\chconj$ to $\chi_U$, and derive the irreducibility constraint from it.

\begin{lemma}\label{lem:conj}
For every $g\in G$ and every unitary representation $U$ of $G$ on $\Hil$,
\begin{equation}
\chconj(g)=|\chi_U(g)|^2=\chi_U(g)^*\chi_U(g).
\end{equation}
\end{lemma}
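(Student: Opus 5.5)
The plan is to compute the trace of the linear map $\Mconj_g$ on $\Lin(\Hil)$ directly in a convenient basis. I fix an orthonormal basis $\{e_1,\dots,e_n\}$ of $\Hil$ and use the matrix units $E_{ij}:=e_ie_j\dg$. These form an orthonormal basis of $\Lin(\Hil)$ for the Hilbert--Schmidt inner product, since $\Tr(E_{ij}\dg E_{kl})=\delta_{ik}\delta_{jl}$. For any linear map $T$ on $\Lin(\Hil)$ we then have
\begin{equation}
\Tr(T)=\sum_{i,j}\langle E_{ij},T(E_{ij})\rangle_{\mathrm{HS}}.
\end{equation}
I apply this formula with $T=\Mconj_g$.

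The key step is to evaluate each diagonal entry. Using cyclicity of the trace,
\begin{equation}
\langle E_{ij},U_gE_{ij}U_g\dg\rangle_{\mathrm{HS}}=\Tr\!\bigl(e_je_i\dg U_g e_ie_j\dg U_g\dg\bigr)=(U_g)_{ii}\,\overline{(U_g)_{jj}},
\end{equation}
where $(U_g)_{ii}=e_i\dg U_ge_i$ and $e_j\dg U_g\dg e_j=\overline{(U_g)_{jj}}$. Summing over $i$ and $j$ factorizes the double sum as $\bigl(\sum_i (U_g)_{ii}\bigr)\bigl(\sum_j (U_g)_{jj}\bigr)^{*}=\chi_U(g)\chi_U(g)^*=|\chi_U(g)|^2$, which is the claim.

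An equivalent, more conceptual route goes through the tensor-product picture. Identify $\Lin(\Hil)\cong\Hil\otimes\Hil^*$ via $X\mapsto\sum X_{ij}\,e_i\otimes e_j^*$. Under this identification $\Mconj_g$ becomes $U_g\otimes\overline{U_g}$, and multiplicativity of the trace over tensor products gives $\Tr(U_g)\Tr(\overline{U_g})=\chi_U(g)\overline{\chi_U(g)}$. I would mention this as a remark.

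There is no real obstacle here. The only points needing care are bookkeeping ones: keeping the complex conjugation on the correct factor, and checking that the basis $E_{ij}$ is orthonormal, so that the trace of the superoperator really is the sum of its diagonal matrix elements.

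I would also note the projective case. If $U_g$ is replaced by $e^{i\theta}U_g$, then $\chi_U(g)$ picks up a phase but $|\chi_U(g)|^2$ does not. This matches the invariance of $\Mconj$ noted earlier, so the identity holds verbatim for projective representations with any choice of lifts.
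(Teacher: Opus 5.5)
Your proof is correct and follows the paper's argument essentially step for step. Both expand $\Tr(\Mconj_g)$ in the Hilbert--Schmidt-orthonormal matrix-unit basis $E_{ij}$, use cyclicity to get $(U_g)_{ii}\overline{(U_g)_{jj}}$, and factor the double sum; your tensor-product route and your projective-phase observation match the paper's Remark on $\Mconj\cong U\otimes\overline{U}$ and its earlier note that conjugation is insensitive to phases.
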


\begin{proof}
Let $d=\dim\Hil$ and pick any orthonormal basis $\{|\psi_i\rangle\}_{i=1}^{d}$ of $\Hil$. Define the rank-one operators $E_{ij}:=|\psi_i\rangle\langle\psi_j|$ for $i,j\in\{1,\dots,d\}$. Then $\{E_{ij}\}$ is an orthonormal basis of $\Lin(\Hil)$ with respect to the Hilbert--Schmidt inner product:
\begin{align}
\langle E_{ij},E_{kl}\rangle_{\mathrm{HS}}
  &=\Tr(E_{ij}\dg E_{kl})\notag\\
  &=\Tr(|\psi_j\rangle\langle\psi_i|\psi_k\rangle\langle\psi_l|)\notag\\
  &=\delta_{ik}\,\Tr(|\psi_j\rangle\langle\psi_l|)=\delta_{ik}\delta_{jl}.
\end{align}
Since $\dim\Lin(\Hil)=d^2$ and we have $d^2$ orthonormal vectors, they span $\Lin(\Hil)$.

The character of $\Mconj$ is, by definition, the trace of $\Mconj_g$ as a linear map on $\Lin(\Hil)$ using any orthonormal basis. Using $\{E_{ij}\}$:
\begin{align}
\chconj(g)
  &=\sum_{i,j=1}^{d}\langle E_{ij},\Mconj_g(E_{ij})\rangle_{\mathrm{HS}}\notag\\
  &=\sum_{i,j=1}^{d}\Tr\!\bigl(E_{ij}\dg\,(U_g E_{ij}U_g\dg)\bigr)\notag\\
  &=\sum_{i,j=1}^{d}\Tr\!\bigl(|\psi_j\rangle\langle\psi_i|\,
       U_g\,|\psi_i\rangle\langle\psi_j|\,U_g\dg\bigr).
\end{align}
By cyclicity of the trace,
\begin{equation}
\chconj(g)
  =\sum_{i,j}\langle\psi_i|U_g|\psi_i\rangle\,\langle\psi_j|U_g\dg|\psi_j\rangle,
\end{equation}
and the double sum factors:
\begin{align}
\chconj(g)
  &=\Bigl(\sum_{i=1}^{d}\langle\psi_i|U_g|\psi_i\rangle\Bigr)
     \Bigl(\sum_{j=1}^{d}\langle\psi_j|U_g\dg|\psi_j\rangle\Bigr)\notag\\
  &=\Tr(U_g)\cdot\overline{\Tr(U_g)}\notag\\
  &=\chi_U(g)\cdot\overline{\chi_U(g)}=|\chi_U(g)|^2,
\end{align}
where we used $\Tr(U_g\dg)=\overline{\Tr(U_g)}$ (since $U_g\dg$ has matrix entries that are complex conjugates of the transpose of $U_g$, and the trace is invariant under transpose).
\end{proof}

\begin{remark}
Lemma~\ref{lem:conj} is equivalent to the representation-theoretic isomorphism $\Mconj\cong U\otimes\overline{U}$, where $\overline{U}$ is the complex-conjugate representation defined by $\overline{U}_g:=\overline{U_g}$ (complex conjugate of the matrix entries in any fixed basis). The character of $U\otimes\overline{U}$ is
\begin{equation}
\chi_{U\otimes\overline{U}}(g)=\chi_U(g)\cdot\overline{\chi_U(g)}=|\chi_U(g)|^2.
\end{equation}
The isomorphism $\Mconj\cong U\otimes\overline{U}$ is established by the map $|\psi_i\rangle\langle\psi_j|\mapsto|\psi_i\rangle\otimes\langle\psi_j|$, which intertwines the two actions.
\end{remark}

The next step is to translate the structural constraint imposed by Ref.~\cite{DmelloGross2026} into a constraint on $\chi_U$. By Theorem~11 of Ref.~\cite{DmelloGross2026}, every family in Eq.~\eqref{eq:seven-families} has the property that the trivial character $\chi_1$ appears in $\chconj$ with multiplicity exactly one. We now show this forces $\chi_U$ to be irreducible.

\begin{lemma}\label{lem:irreducible}
Let $U$ be a (possibly projective) unitary representation of $G$ on $\Hil$, and let $m_1:=\inner{\chi_1}{\chconj}$ be the multiplicity of the trivial character $\chi_1$ in $\chconj$. Write $\chi_U=\sum_i n_i\chi_i$ for the irreducible decomposition of $\chi_U$ (which is a character of $G$, or of a Schur cover in the projective case). Then
\begin{equation}
m_1=\sum_{i}n_i^2.
\end{equation}
In particular, $m_1=1$ if and only if $\chi_U$ is irreducible.
\end{lemma}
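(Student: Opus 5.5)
We first reduce the projective case to the linear one. If $U$ is projective, we lift it to an ordinary unitary representation $\widetilde U$ of the Schur cover $\widetilde G$ (for $D_4$, this is $D_8$). Since the conjugation action is insensitive to phases, $\Mconj$ is the pullback to $\widetilde G$ of a representation of $G$, and $\widetilde U$ is a genuine representation to which Lemma~\ref{lem:conj} applies. The trivial character of $G$ pulls back to the trivial character of $\widetilde G$. Multiplicities of the trivial character are the same whether computed over $G$ or over $\widetilde G$: averaging over $\widetilde G$ equals averaging over $G$ because $\chconj$ is constant on the fibres of $\widetilde G\to G$. Hence it suffices to prove the identity for an ordinary unitary representation $U$ of a finite group (renamed $G$), with $\chi_U=\sum_i n_i\chi_i$ its decomposition into irreducible characters of that group.

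With this reduction, the computation is short. By Lemma~\ref{lem:conj} and the definition of the inner product,
\begin{equation}
m_1=\inner{\chi_1}{\chconj}=\frac{1}{|G|}\sum_{g\in G}\overline{\chi_1(g)}\,|\chi_U(g)|^2=\frac{1}{|G|}\sum_{g}\chi_U(g)^*\chi_U(g)=\inner{\chi_U}{\chi_U},
\end{equation}
since $\chi_1\equiv 1$. Expanding by bilinearity (the $n_i$ are real) and applying Schur orthogonality $\inner{\chi_i}{\chi_j}=\delta_{ij}$ gives $m_1=\sum_{i,j}n_in_j\delta_{ij}=\sum_i n_i^2$.

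For the final claim, note that the $n_i$ are nonnegative integers with $\sum_i n_i\ge 1$ (because $\dim\Hil\ge1$). Therefore $\sum_i n_i^2=1$ holds exactly when one $n_i$ equals $1$ and all the others vanish, that is, when $\chi_U$ is irreducible. This matches the irreducibility criterion $\inner{\chi_U}{\chi_U}=1$ recalled in the preliminaries.

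The only step needing care is the projective reduction. We must check that the group over which $m_1$ is computed (the original $G$) is compatible with the group over which $\chi_U$ is decomposed (the Schur cover). The fibre-averaging argument above handles this. We also need $|\chi_{\widetilde U}|^2$ to be constant on fibres, which holds because the central elements act by scalars of modulus one.
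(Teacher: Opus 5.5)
Your proof is correct and follows the paper's own argument: $m_1=\frac{1}{|G|}\sum_g\chconj(g)=\inner{\chi_U}{\chi_U}=\sum_i n_i^2$, obtained from Lemma~\ref{lem:conj} and Schur orthogonality. Your fibre-averaging justification for passing to the Schur cover spells out the reduction that the paper only asserts in a single clause.
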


\begin{proof}
We carry out the computation in the case where $U$ is a linear representation of $G$ (the projective case reduces to this after passing to the Schur cover). By definition,
\begin{align}
m_1 &= \inner{\chi_1}{\chconj}
      = \frac{1}{|G|}\sum_{g\in G}\chi_1(g)^*\chconj(g)\notag\\
    &= \frac{1}{|G|}\sum_{g\in G}\chconj(g),
\end{align}
since $\chi_1\equiv 1$. By Lemma~\ref{lem:conj}, $\chconj(g)=|\chi_U(g)|^2$, so
\begin{equation}
m_1=\frac{1}{|G|}\sum_{g\in G}|\chi_U(g)|^2=\inner{\chi_U}{\chi_U}.
\end{equation}
Now write $\chi_U=\sum_i n_i\chi_i$ with $n_i\in\N_0$. By bilinearity of the inner product and Schur orthogonality:
\begin{equation}
\inner{\chi_U}{\chi_U}=\sum_{i,j}n_i\,n_j\,\inner{\chi_i}{\chi_j}=\sum_{i,j}n_i n_j\delta_{ij}=\sum_i n_i^2.
\end{equation}
Therefore $m_1=\sum_i n_i^2$. Since the $n_i$ are non-negative integers, $\sum_i n_i^2=1$ forces exactly one $n_i$ to equal $1$ and the rest to equal $0$, which is precisely the condition that $\chi_U$ is an irreducible character.
\end{proof}

\begin{remark}
The argument above shows that $m_1$ equals the number of irreducible components of $U$, counted with multiplicity squared. Concretely: if $U=U_1\oplus U_1$ (two copies of the same irreducible), then $m_1=4$; if $U=U_1\oplus U_2$ (two distinct irreducibles), then $m_1=2$; if $U$ is irreducible, then $m_1=1$. The constraint $m_1=1$ is therefore a strong decomposability condition on $U$.
\end{remark}

We record the consequence for our setting:

\begin{corollary}\label{cor:irred}
For any quantum realization of a family in Eq.~\eqref{eq:seven-families}, the representation $U$ must be irreducible.
\end{corollary}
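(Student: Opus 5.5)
The corollary follows by combining the external structural input with Lemma~\ref{lem:irreducible}, so the plan is a two-line reduction plus a check that the projective case is covered. First, suppose a quantum realization of one of the families in Eq.~\eqref{eq:seven-families} is given: a Hilbert space $\Hil$ and a (possibly projective) unitary representation $U$ of the relevant correction group $G\in\{K_4,\Z_4,D_4\}$ such that $\chconj$ equals the prescribed family character. By Theorem~11 of Ref.~\cite{DmelloGross2026}, each of the seven characters contains the trivial character $\chi_1$ with multiplicity exactly one. This can also be read off directly from the labels in Table~\ref{tab:dimensions}: the index $1$ appears exactly once in every subscript, and the regular characters of $K_4$ and $\Z_4$ contain the trivial character once. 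Hence $m_1=\inner{\chi_1}{\chconj}=1$.

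Second, apply Lemma~\ref{lem:irreducible}, which gives $m_1=\sum_i n_i^2$ where $\chi_U=\sum_i n_i\chi_i$. Since $m_1=1$ and the $n_i$ are non-negative integers, exactly one $n_i$ equals $1$ and the others vanish, so $U$ is irreducible.

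The only point requiring care, and the one I expect to be the main obstacle, is the projective case. There $\chi_U$ is a character of the Schur cover $\widetilde G$ (e.g.\ $D_8$ for $D_4$), while $\chconj$ is a character of $G$. I would handle this by noting that $\Mconj$ factors through $\widetilde G\to G$, because the central kernel acts on $\Hil$ by scalars and therefore trivially by conjugation. The multiplicity of the trivial character is the same whether computed over $G$ or over $\widetilde G$: averaging over $\widetilde G$ amounts to averaging a function that is constant on cosets of the kernel. Consequently $m_1=\frac{1}{|\widetilde G|}\sum_{\tilde g}|\chi_U(\tilde g)|^2=\inner{\chi_U}{\chi_U}_{\widetilde G}$, and Lemma~\ref{lem:irreducible} applies verbatim on $\widetilde G$. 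Irreducibility of $U$ as a $\widetilde G$-representation is then the same as irreducibility as a projective representation of $G$, since both have the same invariant subspaces. This completes the argument.
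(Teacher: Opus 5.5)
Your proposal is correct and follows the paper's proof exactly. You use Theorem~11 of Ref.~\cite{DmelloGross2026} to get $m_1=1$, then Lemma~\ref{lem:irreducible} to force $\sum_i n_i^2=1$. Your extra paragraph on the projective case spells out the Schur-cover reduction that the paper only states parenthetically in the proof of Lemma~\ref{lem:irreducible}: the conjugation action factors through $\widetilde G\to G$, so $m_1=\inner{\chi_U}{\chi_U}_{\widetilde G}$, and the lift has the same invariant subspaces as $U$.
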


\begin{proof}
Each family in Eq.~\eqref{eq:seven-families} has $m_1=1$ by Ref.~\cite[Thm.~11]{DmelloGross2026}. Lemma~\ref{lem:irreducible} then forces $\chi_U$ to be irreducible.
\end{proof}

%-----------------------------------------------------------------------
\section{\label{sec:dimension}Dimension Obstructions}
%-----------------------------------------------------------------------

By Corollary~\ref{cor:irred}, the representation $U$ must be irreducible. We now show that the dimensions of the largest two families are incompatible with this constraint.

\begin{lemma}\label{lem:dimbound}
The correction group relevant to each family in Eq.~\eqref{eq:seven-families} acts on $\Lin(\Hil)$ where $\Hil$ is at most two-dimensional. More precisely, in the trivial projective class $G=D_4$, and in the non-trivial projective class $G=D_8$. In both cases, the maximum dimension of an irreducible representation is $2$, so $\dim\Hil\le 2$ and $\dim\Lin(\Hil)\le 4$.
\end{lemma}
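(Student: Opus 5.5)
The plan combines Corollary~\ref{cor:irred} with the classification of projective classes from Section~\ref{sec:prelim}, and then reads off dimensions from the character tables. First, by Corollary~\ref{cor:irred}, any quantum realization has $U$ irreducible, either as a linear representation of the correction group or, in the projective case, as a linear representation of its Schur cover. For the $D_4$ families, $H^2(D_4,\mathrm{U}(1))\cong\Z_2$ leaves only two cases. In the trivial class, $U$ is a linear irreducible of $D_4$. In the non-trivial class, $U$ is a linear irreducible of $\widetilde{G}=D_8$ with $U_{\zeta^4}=-\id$. Since the conjugation action ignores phases, nothing is lost by replacing $U$ with its lift.

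Second, I would bound the degrees of the irreducibles in each case.
\begin{itemize}
\item For $D_4$, the list $\chi_1,\dots,\chi_5$ is complete by the sum-of-squares check $1+1+1+1+4=8$, so every degree is at most $2$.
\item For $D_8$, I would likewise complete the character table: four one-dimensional characters (all factoring through $D_8/\langle\zeta^4\rangle$) together with $\chi_{E_1},\chi_{E_2},\chi_{E_3}$. These satisfy $4\cdot 1+3\cdot 4=16=|D_8|$, so the list is exhaustive and again every degree is at most $2$. Only $\chi_{E_1},\chi_{E_3}$ lie in the non-trivial class.
\item For $K_4$ and $\Z_4$, both abelian, every irreducible is one-dimensional. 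Their Schur multipliers are $H^2(K_4,\mathrm{U}(1))\cong\Z_2$ and $H^2(\Z_4,\mathrm{U}(1))=0$. The non-trivial class of $K_4$ is realized by the Pauli group, whose projective irreducible is two-dimensional; this should be stated for completeness. In every case $\dim\Hil\le 2$.
\end{itemize}

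Third, irreducibility gives $\dim\Hil=\chi_U(e)\le 2$. Therefore $\dim\Lin(\Hil)=\chconj(e)=|\chi_U(e)|^2\le 4$ by Lemma~\ref{lem:conj}.

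The main obstacle is justifying that $D_8$ can be taken as the Schur cover, and that the cover has no irreducible of degree greater than $2$. Here the order argument is decisive: any irreducible of degree $d$ of a group of order $16$ satisfies $d^2\le 16$ and $d\mid 16$. Degree $4$ would exhaust the sum of squares, contradicting the presence of the trivial character. Hence $d\le 2$ regardless of which Schur cover is chosen, since the other choice, the generalized quaternion group, also has order $16$.
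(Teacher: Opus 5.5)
Your proposal is correct and is essentially the paper's argument. You take irreducibility from Corollary~\ref{cor:irred}, then use the sum-of-squares checks $4\cdot 1+2^2=8$ and $4\cdot 1+3\cdot 2^2=16$ to read off that every irreducible of $D_4$ and of the Schur cover $D_8$ has degree at most $2$, whence $\dim\Hil\le 2$ and $\dim\Lin(\Hil)\le 4$.

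Your extra coverage of $K_4$ and $\Z_4$, which the paper's proof leaves implicit despite the phrase ``each family'', and your cover-independent bound for groups of order $16$ add completeness. The one slip is that $D_4$ has three Schur covers of order $16$ (dihedral, semidihedral and generalized quaternion), not two; your order argument already covers all of them.
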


\begin{proof}
For $G=D_4$: the dimensions of the irreducible representations are the degrees $\chi_i(e)$, namely $1,1,1,1,2$. The largest is $2$.

For $G=D_8$: the order of $D_8$ is $16$, with $7$ conjugacy classes (as listed in Sec.~\ref{sec:prelim}). The sum of squares of degrees must equal $|D_8|=16$. The degrees of $D_8$ are $1,1,1,1,2,2,2$ (four one-dimensional and three two-dimensional irreducibles), satisfying $4\cdot 1+3\cdot 4=16$. The largest degree is $2$.

In either case, $\dim\Hil\le 2$, and therefore $\dim\Lin(\Hil)=(\dim\Hil)^2\le 4$.
\end{proof}

\begin{corollary}\label{cor:dim}
The families $\chi^{D_4}_{12345}$ and $\chi^{D_4}_{123452}$ are not quantum-realizable.
\end{corollary}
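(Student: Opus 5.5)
The argument is a dimension count that combines Corollary~\ref{cor:irred} with Lemma~\ref{lem:dimbound}. Suppose, for contradiction, that one of the two families admits a quantum realization. Then there is a finite-dimensional Hilbert space $\Hil$ and a (possibly projective) unitary representation $U$ of the correction group with $\chconj$ equal to the prescribed character. Both families are characters of $D_4$, so the correction group is $D_4$. By the discussion of projective classes, $U$ is either a linear representation of $D_4$ or a linear representation of the Schur cover $D_8$ in which the central element acts as $-\id$. In either case Corollary~\ref{cor:irred} forces $U$, viewed as a linear representation of $D_4$ or of $D_8$, to be irreducible.

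Next I invoke Lemma~\ref{lem:dimbound}. Every irreducible representation of $D_4$ or $D_8$ has degree at most $2$, so $\dim\Hil\le 2$. The conjugation representation acts on $\Lin(\Hil)$, so its degree satisfies
\begin{equation}
\chconj(e)=\dim\Lin(\Hil)=(\dim\Hil)^2\le 4 .
\end{equation}
The same bound also follows directly from Lemma~\ref{lem:conj}, since $\chconj(e)=|\chi_U(e)|^2=(\dim\Hil)^2$.

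Finally, I compare this with Table~\ref{tab:dimensions}. There $\chi^{D_4}_{12345}(e)=6$ and $\chi^{D_4}_{123452}(e)=8$, both exceeding $4$. Hence neither can equal $\chconj$ for any admissible $U$, which is the claim.

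As a side check, $6$ and $8$ are not perfect squares. So even without the degree bound, no $U$ of any dimension could give $\chconj(e)=(\dim\Hil)^2$ equal to these values. I would mention this as a remark, since it shows the obstruction does not even depend on irreducibility.

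The only step needing care is the projective case. Once $U$ lifts to $D_8$, the conjugation action of the lift factors through $D_4$, because the central element $\zeta^4$ acts by $-\id$ and conjugation by a scalar is trivial. Its character therefore genuinely is a class function on $D_4$ that can be compared with the listed families. Its degree is still $(\dim\Hil)^2$, so the dimension comparison is unaffected.
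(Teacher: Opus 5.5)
Your argument is correct and is essentially the paper's own: Corollary~\ref{cor:irred} together with the degree bound of Lemma~\ref{lem:dimbound} gives $\chconj(e)=(\dim\Hil)^2\le 4$, which contradicts the values $6$ and $8$. Your perfect-square remark is also valid and is in fact the more elementary proof. The identity $\chconj(e)=(\dim\Hil)^2$ alone excludes $6$ and $8$, with no appeal to irreducibility (and hence to the $m_1=1$ input from Dmello--Gross) or to any property of $D_4$ and its Schur cover.
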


\begin{proof}
The dimension of a representation equals its character value at the identity element. We compute:
\begin{align}
\chi^{D_4}_{12345}(e)
  &=\chi_1(e)+\chi_2(e)+\chi_3(e)+\chi_4(e)+\chi_5(e)\notag\\
  &=1+1+1+1+2=6,\\
\chi^{D_4}_{123452}(e)
  &=\chi_1(e)+\chi_2(e)+\chi_3(e)+\chi_4(e)+2\chi_5(e)\notag\\
  &=1+1+1+1+4=8.
\end{align}
A quantum realization would require $\dim\Lin(\Hil)=\chconj(e)\in\{6,8\}$, contradicting Lemma~\ref{lem:dimbound}.
\end{proof}

%-----------------------------------------------------------------------
\section{\label{sec:trivial}The Trivial Projective Class of $D_4$}
%-----------------------------------------------------------------------

We now assume $U$ is an ordinary (linear) unitary representation of $G=D_4$.

\begin{proposition}\label{prop:trivial}
Let $U$ be an ordinary (linear) unitary representation of $D_4$ on $\Hil$. The only family in Eq.~\eqref{eq:seven-families} that arises as $\chconj$ for such a $U$ is the pullback of $\chi^{K_4}_{1234}$ under the quotient map $D_4\twoheadrightarrow D_4/\langle r^2\rangle\cong K_4$. Concretely, $\chconj=\chi_1+\chi_2+\chi_3+\chi_4$ as a character of $D_4$, and this is identified with the regular character of $K_4$ via the quotient. The realization is obtained by taking $\chi_U=\chi_5$, the unique two-dimensional irreducible character of $D_4$.
\end{proposition}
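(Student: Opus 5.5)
By Corollary~\ref{cor:irred}, any realization must have $\chi_U$ irreducible. Since $U$ is an ordinary representation of $D_4$, $\chi_U$ is one of $\chi_1,\dots,\chi_5$, and the proof reduces to a finite case check. For each candidate I will compute $\chconj=|\chi_U|^2$ pointwise on the class representatives $(e,r,r^2,s,rs)$ using Lemma~\ref{lem:conj}, decompose the result via the inner products $\inner{\chi_i}{\chconj}$, and compare it with the list in Eq.~\eqref{eq:seven-families}.

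\emph{One-dimensional case.} If $\chi_U=\chi_k$ with $k\le 4$, then $|\chi_k(g)|^2=1$ for every $g$, so $\chconj=\chi_1$. This has dimension $1$. Every family in Table~\ref{tab:dimensions} has dimension at least $4$, so none of them arises.

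\emph{Two-dimensional case.} If $\chi_U=\chi_5=(2,0,-2,0,0)$, then $\chconj=(4,0,4,0,0)$. Summing the four linear characters in the table gives
\[
\chi_1+\chi_2+\chi_3+\chi_4=(4,0,4,0,0),
\]
so $\chconj=\chi_1+\chi_2+\chi_3+\chi_4$ and $\chi_5$ does not appear. As a cross-check, $\inner{\chi_5}{\chconj}=\tfrac18(2\cdot4+(-2)\cdot4)=0$. This character is constant on cosets of $\langle r^2\rangle=\{e,r^2\}$ and vanishes off that subgroup, so it factors through the quotient $D_4/\langle r^2\rangle$. I still need to check that this quotient is $K_4$ rather than $\Z_4$. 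It is $K_4$ because the images of $r$, $s$ and $rs$ all square into $\langle r^2\rangle$: $r^2\in\langle r^2\rangle$, $s^2=e$, and $(rs)^2=e$. The four linear characters $\chi_1,\dots,\chi_4$ are the inflations of the four characters of $K_4$. The induced class function on $K_4$ takes value $4$ at the identity coset and $0$ elsewhere, which is the regular character $\chi^{K_4}_{1234}$.

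\emph{Excluding the other $D_4$ families.} It remains to note that $\chi_1+\chi_2+\chi_3+\chi_4$ equals none of $\chi^{D_4}_{125}$, $\chi^{D_4}_{135}$, $\chi^{D_4}_{145}$, $\chi^{D_4}_{12345}$, $\chi^{D_4}_{123452}$. Each of those contains $\chi_5$ with positive multiplicity, and irreducible decompositions are unique. The $\Z_4$ family is also excluded, because the relevant quotient is $K_4$, not $\Z_4$.

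The main subtlety is interpretive rather than computational. The statement identifies a $D_4$-character with a $K_4$-family through inflation, so I must state clearly that the match is "pullback equals $\chconj$". I must also justify that $\Z_4$ cannot arise this way, since $\Z_4$ is not a quotient of $D_4$ by $\langle r^2\rangle$ and every element of the quotient has order dividing $2$. Everything else is direct arithmetic on the character table.
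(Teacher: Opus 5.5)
Your proof is correct and follows the paper's argument essentially step for step: irreducibility via Corollary~\ref{cor:irred}, the five-case check, $\chconj=|\chi_5|^2=(4,0,4,0,0)=\chi_1+\chi_2+\chi_3+\chi_4$, and identification with $\chi^{K_4}_{1234}$ through $D_4/\langle r^2\rangle$. The differences are minor. You exclude all five $D_4$ families uniformly by the absence of $\chi_5$, whereas the paper uses Corollary~\ref{cor:dim} for $\chi^{D_4}_{12345}$ and $\chi^{D_4}_{123452}$. You also check explicitly that the quotient has exponent $2$, hence is $K_4$ rather than $\Z_4$, which the paper only asserts.
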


\begin{proof}
By Corollary~\ref{cor:irred}, $\chi_U$ must be an irreducible character of $D_4$. There are five choices: $\chi_1,\dots,\chi_5$.

\medskip\noindent\textit{Step 1: The one-dimensional cases produce $\dim\Lin(\Hil)=1$.}

If $\chi_U\in\{\chi_1,\chi_2,\chi_3,\chi_4\}$, then $\dim\Hil=1$, so $\Lin(\Hil)\cong\C$ is one-dimensional. The conjugation representation on $\C$ is trivial (every unitary on $\C$ is a phase, and $\alpha\cdot 1\cdot\alpha^*=1$), so $\chconj=\chi_1$, a single copy of the trivial character. This does not match any family in Eq.~\eqref{eq:seven-families} (all families have dimension $\ge 4$ at the identity).

\medskip\noindent\textit{Step 2: The case $\chi_U=\chi_5$.}

Now $\dim\Hil=2$, so $\dim\Lin(\Hil)=4$. By Lemma~\ref{lem:conj}, $\chconj(g)=|\chi_5(g)|^2$ for all $g\in D_4$. Using the values of $\chi_5$ at the five conjugacy class representatives:
\begin{align}
\chconj(e)&=|2|^2=4,   &  \chconj(r)&=|0|^2=0,\notag\\
\chconj(r^2)&=|-2|^2=4, & \chconj(s)&=|0|^2=0,\notag\\
\chconj(rs)&=|0|^2=0.
\end{align}
So $\chconj=(4,0,4,0,0)$ at representatives $(e,r,r^2,s,rs)$, with class sizes $(1,2,1,2,2)$.

\medskip\noindent\textit{Step 3: Decompose $\chconj$ into irreducibles of $D_4$.}

The multiplicity of $\chi_i$ in $\chconj$ is $m_i=\inner{\chi_i}{\chconj}$. With $|D_4|=8$ and using the class-size-weighted sum:
\begin{equation}
m_i=\frac{1}{8}\sum_{K}\,|K|\cdot\chi_i(K)^*\cdot\chconj(K),
\end{equation}
where $K$ ranges over conjugacy classes. The non-zero contributions come from $K_1$ (size $1$, $\chconj=4$) and $K_{r^2}$ (size $1$, $\chconj=4$). For each $i$, $\chi_i(e)=\chi_i(r^2)=1$ when $i\in\{1,2,3,4\}$ (since the one-dimensional characters factor through the quotient $D_4/\langle r^2\rangle$), and $\chi_5(e)=2$, $\chi_5(r^2)=-2$. Therefore
\begin{align}
m_i&=\tfrac{1}{8}\bigl[\chi_i(e)\cdot 4+\chi_i(r^2)\cdot 4\bigr]=1\quad(i=1,2,3,4),\\
m_5&=\tfrac{1}{8}\bigl[2\cdot 4+(-2)\cdot 4\bigr]=0.
\end{align}
Therefore
\begin{equation}
\chconj=\chi_1+\chi_2+\chi_3+\chi_4.
\end{equation}

\medskip\noindent\textit{Step 4: Identify $\chi_1+\chi_2+\chi_3+\chi_4$ with $\chi^{K_4}_{1234}$.}

The four one-dimensional irreducible characters of $D_4$ collectively form the pullback of the regular character of $K_4$ along the quotient map $D_4\to K_4$. To see this, note that $\langle r^2\rangle=\{e,r^2\}$ is the center of $D_4$ and is normal; the quotient is $D_4/\langle r^2\rangle\cong K_4$. Each $\chi_i$ for $i\in\{1,2,3,4\}$ satisfies $\chi_i(r^2)=\chi_i(e)=1$, so $r^2$ acts trivially in each one-dimensional representation, and hence each $\chi_i$ factors through $K_4$. The set $\{\chi_1,\chi_2,\chi_3,\chi_4\}$ is therefore the complete list of irreducible characters of $K_4$, and their sum is the regular character $\chi^{K_4}_{1234}$.

\medskip\noindent\textit{Step 5: No other families arise from the trivial class.}

The families $\chi^{D_4}_{125}$, $\chi^{D_4}_{135}$, and $\chi^{D_4}_{145}$ each contain $\chi_5$ with multiplicity $1$, but we showed $m_5=0$ in the trivial class. They cannot arise. The families $\chi^{D_4}_{12345}$ and $\chi^{D_4}_{123452}$ were already ruled out in Corollary~\ref{cor:dim}. The families $\chi^{K_4}_{1234}$ and $\chi^{\Z_4}_{1234}$ both have dimension $4$; we showed only $\chi^{K_4}_{1234}$ arises.
\end{proof}

%-----------------------------------------------------------------------
\section{\label{sec:nontrivial}The Non-Trivial Projective Class of $D_4$}
%-----------------------------------------------------------------------

We now turn to the non-trivial projective class of $D_4$. By the discussion of Sec.~\ref{sec:prelim}, a projective representation of $D_4$ in the non-trivial class lifts to an ordinary linear representation of the Schur cover $D_8$ in which the central element $\zeta^4\in D_8$ acts as $-\id$.

\begin{proposition}\label{prop:nontrivial}
In the non-trivial projective class of $D_4$, the only family in Eq.~\eqref{eq:seven-families} that is quantum-realizable is $\chi^{D_4}_{125}$. It is realized by taking $\chi_U=\chi_{E_1}$ (or equivalently $\chi_{E_3}$), the two-dimensional irreducible characters of $D_8$ listed in Sec.~\ref{sec:prelim}.
\end{proposition}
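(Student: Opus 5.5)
I will mirror the proof of Proposition~\ref{prop:trivial}, but now work on the Schur cover $D_8$. By Corollary~\ref{cor:irred}, the lift of $U$ to $D_8$ must be irreducible. Because $U$ lies in the non-trivial class, the central element $\zeta^4$ must act as $-\id$, so $\chi_U(\zeta^4)=-\chi_U(e)$. The one-dimensional characters of $D_8$ and $\chi_{E_2}$ all satisfy $\chi(\zeta^4)=+\chi(e)$, since they factor through $D_4$. Hence the only candidates are $\chi_U\in\{\chi_{E_1},\chi_{E_3}\}$, and in either case $\dim\Hil=2$.

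The next step is to compute $\chconj=|\chi_U|^2$ by Lemma~\ref{lem:conj}. Over the $D_8$ class representatives $(e,\zeta,\zeta^2,\zeta^3,\zeta^4,\eta,\zeta\eta)$ both candidates give $(4,2,0,2,4,0,0)$. This function is constant on cosets of $\langle\zeta^4\rangle$, as it must be, since conjugation is insensitive to phases. It therefore descends to a class function on $D_4$. Under $\zeta\mapsto r$ and $\eta\mapsto s$, the descended function takes the values $(4,2,0,0,0)$ at $(e,r,r^2,s,rs)$. The identification requires care: $\zeta$ and $\zeta^3$ both map into $K_r$, $\zeta^2$ maps to $r^2$, and the two reflection classes of $D_8$ map onto $K_s$ and $K_{rs}$.

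I then decompose this function into $D_4$ irreducibles using the weighted inner product with class sizes $(1,2,1,2,2)$. This gives
\[
m_1=\tfrac18(4+4)=1,\quad m_2=1,\quad m_3=m_4=\tfrac18(4-4)=0,\quad m_5=\tfrac18(8)=1.
\]
So $\chconj=\chi_1+\chi_2+\chi_5=\chi^{D_4}_{125}$, and this family is realized.

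To exclude every other family in this class, I will argue as follows. Since $\chi_U$ is forced up to the choice $E_1$ versus $E_3$, and both give the same $\chconj$, exactly one conjugation character arises. It differs from $\chi^{D_4}_{135}$, $\chi^{D_4}_{145}$, and the pullback of $\chi^{K_4}_{1234}$, since those have different multiplicity vectors. The families $\chi^{D_4}_{12345}$ and $\chi^{D_4}_{123452}$ are already excluded by Corollary~\ref{cor:dim}.

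The main obstacle is $\chi^{\Z_4}_{1234}$, which is a character of a different group. It has to be interpreted via a quotient or subgroup of the correction group, as in Step~4 of Proposition~\ref{prop:trivial}. I would handle it by observing that $\chi_1+\chi_2+\chi_5$ does not vanish at $r$; its value there is $2$. A pullback of the regular character of $\Z_4$ along any surjection would vanish off the kernel, and $D_4$ has no $\Z_4$ quotient anyway. So the computed character cannot be identified with it. A secondary point is to keep the class bookkeeping under the map $D_8\to D_4$ correct, because an error there would change the multiplicities.
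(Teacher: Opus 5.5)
Your proposal is correct and follows the paper's proof essentially step for step: irreducibility of the lift to $D_8$ via Corollary~\ref{cor:irred}, restriction to $\chi_{E_1},\chi_{E_3}$ because $\zeta^4$ must act as $-\id$, descent of $|\chi_U|^2=(4,2,0,0,0)$ to $D_4$, and the multiplicity computation giving $\chi_1+\chi_2+\chi_5$. The one difference is the $\chi^{\Z_4}_{1234}$ case: the paper does not treat it inside this proof but defers to Proposition~\ref{prop:Z4}, which shows that any unitary representation of the abelian group $\Z_4$ gives $m_1\ge\dim\Hil$ and so never has to compare a $\Z_4$-character with a $D_4$-character, whereas you exclude it locally by noting the value $2$ at $r$ and that $D_4$ has no $\Z_4$ quotient.
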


\begin{proof}
By Corollary~\ref{cor:irred} applied to the Schur cover $D_8$, the representation $U:D_8\to\mathrm{U}(\Hil)$ must be irreducible. Among the irreducible representations of $D_8$, those in which $\zeta^4$ acts as $-\id$ are exactly $\chi_{E_1}$ and $\chi_{E_3}$ ($\chi_{E_2}$ acts trivially on $\langle\zeta^4\rangle$, as noted in Sec.~\ref{sec:prelim}). So the only candidates are $\chi_U\in\{\chi_{E_1},\chi_{E_3}\}$.

\medskip\noindent\textit{Step 1: $\chconj$ descends to a class function on $D_4$.}

Since $U_{\zeta^4}=-\id$, we have $\Mconj_{\zeta^4}(X)=(-\id)X(-\id)=X$ for all $X\in\Lin(\Hil)$. Therefore $\chconj(\zeta^4 g)=\chconj(g)$ for all $g$, meaning $\chconj$ is constant on cosets of $\langle\zeta^4\rangle$. In other words, $\chconj$ factors through $D_8/\langle\zeta^4\rangle\cong D_4$.

We may therefore compute the decomposition of $\chconj$ into irreducibles of $D_4$ by evaluating $\chconj$ at the $D_4$ coset representatives, using the lifts $(e,r,r^2,s,rs)\mapsto(1,\zeta,\zeta^2,\eta,\zeta\eta)$.

\medskip\noindent\textit{Step 2: Compute $\chconj$ for $\chi_U=\chi_{E_1}$.}

By Lemma~\ref{lem:conj}, $\chconj(g)=|\chi_{E_1}(g)|^2$. Evaluating at the five $D_4$ representatives (via their $D_8$ lifts):
\begin{align}
\chconj(e)&=|\chi_{E_1}(1)|^2=|2|^2=4,\notag\\
\chconj(r)&=|\chi_{E_1}(\zeta)|^2=|\sqrt{2}|^2=2,\notag\\
\chconj(r^2)&=|\chi_{E_1}(\zeta^2)|^2=|0|^2=0,\notag\\
\chconj(s)&=|\chi_{E_1}(\eta)|^2=|0|^2=0,\notag\\
\chconj(rs)&=|\chi_{E_1}(\zeta\eta)|^2=|0|^2=0.
\end{align}
So $\chconj=(4,2,0,0,0)$ at $(e,r,r^2,s,rs)$ with class sizes $(1,2,1,2,2)$.

\medskip\noindent\textit{Step 3: Decompose $\chconj$ into irreducibles of $D_4$.}

We compute $m_i=\inner{\chi_i}{\chconj}$ for $i=1,\dots,5$. The only non-zero contributions come from $K_1$ and $K_r$:
\begin{equation}
m_i=\frac{1}{8}\bigl[1\cdot\chi_i(e)\cdot 4+2\cdot\chi_i(r)\cdot 2\bigr]=\frac{1}{2}\bigl[\chi_i(e)+\chi_i(r)\bigr].
\end{equation}
Substituting the character values:
\begin{align}
m_1&=\tfrac{1}{2}[1+1]=1, & m_2&=\tfrac{1}{2}[1+1]=1,\notag\\
m_3&=\tfrac{1}{2}[1+(-1)]=0, & m_4&=\tfrac{1}{2}[1+(-1)]=0,\notag\\
m_5&=\tfrac{1}{2}[2+0]=1.
\end{align}
Thus $\chconj=\chi_1+\chi_2+\chi_5$, which has dimension $1+1+2=4=\dim\Lin(\C^2)$. This matches the family $\chi^{D_4}_{125}$.

\medskip\noindent\textit{Step 4: The case $\chi_U=\chi_{E_3}$ gives the same result.}

The character $\chi_{E_3}$ differs from $\chi_{E_1}$ only on the classes $K_\zeta$ and $K_{\zeta^3}$ (where it has values $-\sqrt{2}$ and $\sqrt{2}$ instead of $\sqrt{2}$ and $-\sqrt{2}$). Since the computation of $\chconj$ uses $|\chi_U(g)|^2$, and $|-\sqrt{2}|^2=|\sqrt{2}|^2=2$, the values of $\chconj$ are identical for both characters. Hence the decomposition is the same: $\chconj=\chi_1+\chi_2+\chi_5=\chi^{D_4}_{125}$.

\medskip\noindent\textit{Step 5: No other families arise.}

Both $\chi_{E_1}$ and $\chi_{E_3}$ produce $\chi^{D_4}_{125}$. The families $\chi^{D_4}_{135}$ and $\chi^{D_4}_{145}$ are not produced (they would require different multiplicities $m_3$ or $m_4$). The family $\chi^{K_4}_{1234}$ requires $m_5=0$, but here $m_5=1$. The families $\chi^{D_4}_{12345}$ and $\chi^{D_4}_{123452}$ are excluded by Corollary~\ref{cor:dim}. The family $\chi^{\Z_4}_{1234}$ is addressed in Proposition~\ref{prop:Z4}.
\end{proof}

%-----------------------------------------------------------------------
\section{\label{sec:remaining}The Remaining Cases}
%-----------------------------------------------------------------------

It remains to rule out $\chi^{\Z_4}_{1234}$, $\chi^{D_4}_{135}$, and $\chi^{D_4}_{145}$.

\begin{proposition}\label{prop:Z4}
The family $\chi^{\Z_4}_{1234}$ admits no quantum realization.
\end{proposition}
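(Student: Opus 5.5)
We prove Proposition~\ref{prop:Z4} by the same two-step strategy as in Propositions~\ref{prop:trivial} and~\ref{prop:nontrivial}: first use Corollary~\ref{cor:irred} to force $U$ to be irreducible, then show that no irreducible candidate produces the prescribed conjugation character. Here the correction group is $\Z_4$. Since $\Z_4$ is cyclic, $H^2(\Z_4,\mathrm{U}(1))=0$. Hence every projective unitary representation of $\Z_4$ is equivalent, after rephasing each $U_g$, to a linear one. Concretely, choose $U_t$ and set $U_{t^j}:=U_t^j$, rescaled so that $U_t^4=\id$. This rescaling does not change $\Mconj$, as noted in Sec.~\ref{sec:prelim}. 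So we may assume $U$ is an ordinary linear representation of $\Z_4$, and Lemma~\ref{lem:conj} and Lemma~\ref{lem:irreducible} apply directly.

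By Corollary~\ref{cor:irred}, $\chi_U$ must be irreducible. Because $\Z_4$ is abelian, every irreducible character is one-dimensional, namely $\chi_U=\chi^{\Z_4}_k$ for some $k$. Then $\dim\Hil=1$, and by Lemma~\ref{lem:conj} we get $\chconj(t^j)=|i^{kj}|^2=1$ for all $j$. Thus $\chconj=\chi^{\Z_4}_0$, the trivial character, of dimension $1$. On the other hand, $\chi^{\Z_4}_{1234}(e)=4$ by Table~\ref{tab:dimensions}. The dimensions disagree, so this case gives a contradiction.

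Two side cases still need to be closed. \emph{First}, one could ask whether the $\Z_4$ family might instead be realized through some larger correction group, as the $K_4$ family was realized through $D_4$ in Proposition~\ref{prop:trivial}. We handle this by noting that Proposition~\ref{prop:trivial}, Proposition~\ref{prop:nontrivial} and Corollary~\ref{cor:dim} already list every $\chconj$ obtainable from irreducible $U$ for $D_4$ and its Schur cover. For linear $D_4$ the only dimension-$4$ outcome is $\chi_1+\chi_2+\chi_3+\chi_4$, which factors through $K_4$. For the Schur cover it is $\chi_1+\chi_2+\chi_5$. Neither is the pullback of the regular character of $\Z_4$. Indeed, $D_4$ has no quotient isomorphic to $\Z_4$, since its abelianization is $K_4$. \emph{Second}, we check directly that $|\chi_U|^2$ can never equal the regular character $(4,0,0,0)$ of $\Z_4$. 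That would require $\sum_g|\chi_U(g)|^2=4$, which means $\inner{\chi_U}{\chi_U}=1$ with $\chi_U(e)=2$. This is impossible for the abelian group $\Z_4$, whose irreducibles all have degree $1$.

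The main obstacle is interpretive rather than computational: the argument must justify that the relevant group is $\Z_4$ itself, with no nontrivial projective class. The first step we will write out carefully is therefore the cohomology vanishing $H^2(\Z_4,\mathrm{U}(1))=0$, together with the explicit rephasing $U_t\mapsto e^{i\theta}U_t$ chosen so that $U_t^4=\id$. Everything after that is the one-line computation $|\chi_U|^2\equiv1$ against the required value $4$ at the identity.
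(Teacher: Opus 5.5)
Your proof is correct and follows essentially the paper's route. The trivial Schur multiplier of $\Z_4$ reduces to a linear $U$, abelianness together with $m_1=1$ forces $\dim\Hil=1$, and then $\chconj=\chi^{\Z_4}_0$ has dimension $1\neq 4$. The one difference is how you reach $\dim\Hil=1$. You use Corollary~\ref{cor:irred} plus the fact that irreducibles of an abelian group are one-dimensional, while the paper exhibits the $d$ fixed eigenprojectors $|\psi_k\rangle\langle\psi_k|$ to get $m_1\ge d$ directly. The paper also appeals to $\dim\Hil\ge 2$ for a CHSH violation, which your version does not need, and your second side check is a self-contained argument that does not rely on Theorem~11 of Dmello--Gross.
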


\begin{proof}
We show that no unitary representation $U$ of $\Z_4$ on a Hilbert space $\Hil$ with $\dim\Hil\ge 2$ can produce $\chconj=\chi^{\Z_4}_{1234}$.

The Schur multiplier of any finite cyclic group is trivial~\cite[Thm.~2.3.1]{Karpilovsky1985}: $M(\Z_n)=H^2(\Z_n,\mathrm{U}(1))=0$. Therefore every projective representation of $\Z_4$ is equivalent to a linear one, and we may assume $U$ is linear.

A unitary representation of the abelian group $\Z_4$ on a finite-dimensional Hilbert space $\Hil$ is simultaneously diagonalizable: there exists an orthonormal basis $\{|\psi_1\rangle,\dots,|\psi_d\rangle\}$ of $\Hil$ such that $U_t|\psi_k\rangle=\lambda_k|\psi_k\rangle$ for all $k$, with $\lambda_k\in\mathrm{U}(1)$. The operators $|\psi_k\rangle\langle\psi_k|$ are then fixed by $\Mconj$:
\begin{align}
\Mconj_t(|\psi_k\rangle\langle\psi_k|)
  &=U_t|\psi_k\rangle\langle\psi_k|U_t\dg\notag\\
  &=\lambda_k|\psi_k\rangle\overline{\lambda_k}\langle\psi_k|\notag\\
  &=|\lambda_k|^2|\psi_k\rangle\langle\psi_k|=|\psi_k\rangle\langle\psi_k|.
\end{align}
Each projector $|\psi_k\rangle\langle\psi_k|$ spans a one-dimensional invariant subspace of $\Lin(\Hil)$ on which $\Mconj$ acts trivially. Therefore the multiplicity of $\chi_1$ in $\chconj$ is at least $d=\dim\Hil$.

For a CHSH violation in the GPT setting, one requires $\dim\Hil\ge 2$ (a one-dimensional Hilbert space has no entanglement). Hence $m_1\ge 2$, contradicting the requirement $m_1=1$. If instead $\dim\Hil=1$, then $\chconj=\chi_1$ has dimension $1$, while $\chi^{\Z_4}_{1234}$ has dimension $4$. In both cases there is no realization.
\end{proof}

\begin{proposition}\label{prop:135-145}
The families $\chi^{D_4}_{135}$ and $\chi^{D_4}_{145}$ admit no quantum realization.
\end{proposition}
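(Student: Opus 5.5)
The plan is an exhaustive case check over the two projective classes of $D_4$, using the results already proved. By Corollary~\ref{cor:irred}, any realization of either $\chi^{D_4}_{135}$ or $\chi^{D_4}_{145}$ needs an irreducible (possibly projective) $U$. Both families have dimension $4$, so $\dim\Lin(\Hil)=4$ and $\dim\Hil=2$. Since $H^2(D_4,\mathrm{U}(1))\cong\Z_2$, $U$ lies either in the trivial class (a linear representation of $D_4$) or in the non-trivial class (a linear representation of $D_8$ with $U_{\zeta^4}=-\id$). I will treat the two classes separately and show that in each the computed $\chconj$ differs from both target characters.

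\emph{Trivial class.} The only two-dimensional irreducible of $D_4$ is $\chi_5$. Proposition~\ref{prop:trivial} (Steps 2--3) shows that it gives $\chconj=\chi_1+\chi_2+\chi_3+\chi_4$, with $m_5=0$. Both $\chi^{D_4}_{135}$ and $\chi^{D_4}_{145}$ contain $\chi_5$ with multiplicity one, so neither arises here.

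\emph{Non-trivial class.} The only candidates are $\chi_{E_1}$ and $\chi_{E_3}$. Proposition~\ref{prop:nontrivial} shows that both give $\chconj=\chi_1+\chi_2+\chi_5$, so $m_2=1$ and $m_3=m_4=0$. This is incompatible with $\chi^{D_4}_{135}$, which requires $m_3=1$, and with $\chi^{D_4}_{145}$, which requires $m_4=1$.

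The cleaner key identity, worth stating explicitly, is $\chconj(r)=|\chi_U(r)|^2\ge 0$. The multiplicity of $\chi_2$ equals $\tfrac18\sum_K|K|\chi_2(K)\chconj(K)$. Because $\chi_3$ and $\chi_4$ are negative on $K_r$ while $\chconj$ is nonnegative everywhere, a short check on values gives the required inequality. A direct alternative is the observation that $\chi^{D_4}_{135}$ evaluated on $K_r$ is $1-1+0=0$, and on $K_s$ and $K_{rs}$ it takes the values $2$ and $-2$ in some order. A negative value can never equal $|\chi_U|^2\ge 0$, so the character is impossible by Lemma~\ref{lem:conj} alone. The same holds for $\chi^{D_4}_{145}$: on $s$ it equals $1-1+0=0$, and on $rs$ it equals $1+1+0=2$. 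That case needs the enumeration above rather than positivity.

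The main obstacle I anticipate is the point about relabeling. Dmello--Gross's convention for $D_4$ may identify $r,s,rs$ differently, and an outer automorphism of $D_4$ swaps $\chi_3$ and $\chi_4$ (and would swap $\chi_2$ with one of them under other choices of generators). So I must argue that the conclusion survives every identification. I would handle this by noting that the realizable character must have its two nontrivial linear constituents pinned to the class $K_r$ of order-$4$ elements, because that is the only noncentral class on which $|\chi_{E_1}|^2\neq 0$. That pins $\chi_2$, the kernel-$\langle r\rangle$ character, canonically. $\chi_3$ and $\chi_4$ are the linear characters nontrivial on $r$, and they can never appear.
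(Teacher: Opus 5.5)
Your main line of argument is correct and is the paper's primary argument. With $\dim\Hil=2$ and $U$ irreducible, the trivial class leaves only $\chi_U=\chi_5$, which gives $m_5=0$. The non-trivial class leaves only $\chi_{E_1},\chi_{E_3}$, which give $\chi_1+\chi_2+\chi_5$ with $m_3=m_4=0$. The paper adds a parity argument for $m_5$ and a vanishing argument on reflections, but this enumeration alone suffices.

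The ``cleaner'' positivity paragraph, however, is wrong and should be dropped. You miscomputed $\chi^{D_4}_{135}(rs)=\chi_1(rs)+\chi_3(rs)+\chi_5(rs)=1-1+0=0$; it is not $-2$. At $(e,r,r^2,s,rs)$ one has $\chi^{D_4}_{135}=(4,0,0,2,0)$ and $\chi^{D_4}_{145}=(4,0,0,0,2)$. Both are nonnegative class functions, so no argument that uses only $\chconj=|\chi_U|^2\ge 0$ can exclude either family. This includes the unspecified ``required inequality'' on multiplicities; for instance $m_2-m_3=\tfrac12[\chconj(r)-\chconj(s)]$ has no definite sign. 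The asymmetry you found between the two families should have been a warning. The outer automorphism of $D_4$ exchanging $K_s$ and $K_{rs}$ swaps $\chi_3\leftrightarrow\chi_4$ and fixes $\chi_1,\chi_2,\chi_5$, so the two families must behave identically.

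If you want a direct shortcut, use vanishing rather than positivity, as the paper does. Since $\chi_{E_1},\chi_{E_3}$ vanish on $\eta$ and $\zeta\eta$, you get $\chconj(s)=\chconj(rs)=0$, whereas $\chi^{D_4}_{135}(s)=\chi^{D_4}_{145}(rs)=2$. An even more direct route works in both projective classes and needs no irreducibility. For $\dim\Hil=2$ and any involution $g$ (such as $s$ or $rs$), $U_g^2$ is a scalar. Its eigenvalues are therefore $\lambda,\pm\lambda$, so $|\chi_U(g)|^2\in\{0,4\}$ and can never equal $2$.
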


\begin{proof}
We treat the trivial and non-trivial projective classes of $D_4$ separately.

\medskip\noindent\textit{Trivial class.} We showed in Proposition~\ref{prop:trivial} that the only family arising in the trivial class is $\chi^{K_4}_{1234}$. Since $\chi^{D_4}_{135}$ and $\chi^{D_4}_{145}$ are different families, they are not realized in the trivial class.

For an alternative direct argument that does not use Corollary~\ref{cor:irred}, we show that the multiplicity of $\chi_5$ in $\chconj$ is always even. Write $\chi_U=a\chi_1+b\chi_2+c\chi_3+d\chi_4+e\chi_5$ with $a,b,c,d,e\in\N_0$. By Lemma~\ref{lem:conj} and the remark following it, $\chconj=\chi_U\otimes\overline{\chi_U}$. Since all irreducible characters of $D_4$ are real, $\overline{\chi_U}=\chi_U$.

The product structure of the character ring of $D_4$ gives $\chi_5\otimes\chi_i=\chi_5$ for $i\in\{1,2,3,4\}$ (tensoring the unique two-dimensional irreducible with a one-dimensional character yields a two-dimensional irreducible, which must be $\chi_5$ itself), and $\chi_5\otimes\chi_5=\chi_1+\chi_2+\chi_3+\chi_4$. The $\chi_5$-component of $\chconj=\chi_U\otimes\chi_U$ therefore comes only from cross terms $\chi_i\otimes\chi_5$ with $i\in\{1,2,3,4\}$, contributing
\begin{equation}
m_5=2e(a+b+c+d),
\end{equation}
which is even. But $\chi^{D_4}_{135}=\chi_1+\chi_3+\chi_5$ and $\chi^{D_4}_{145}=\chi_1+\chi_4+\chi_5$ each contain $\chi_5$ with multiplicity $1$, which is odd. Contradiction.

\medskip\noindent\textit{Non-trivial class.} In the non-trivial projective class, the only irreducible representations are $\chi_{E_1}$ and $\chi_{E_3}$, and we showed in Proposition~\ref{prop:nontrivial} that both yield $\chconj=\chi^{D_4}_{125}$.

For a direct argument, observe that both $\chi_{E_1}$ and $\chi_{E_3}$ vanish on the reflection conjugacy classes of $D_8$:
\begin{equation}
\chi_{E_1}(\eta)=\chi_{E_1}(\zeta\eta)=\chi_{E_3}(\eta)=\chi_{E_3}(\zeta\eta)=0.
\end{equation}
By Lemma~\ref{lem:conj}, $\chconj(g)=|\chi_U(g)|^2$, so $\chconj$ also vanishes on these classes. In particular, $\chconj(s)=\chconj(rs)=0$. But
\begin{align}
\chi^{D_4}_{135}(s)&=\chi_1(s)+\chi_3(s)+\chi_5(s)=1+1+0=2\ne 0,\\
\chi^{D_4}_{145}(rs)&=\chi_1(rs)+\chi_4(rs)+\chi_5(rs)=1+1+0=2\ne 0.
\end{align}
Neither family can therefore arise.
\end{proof}

%-----------------------------------------------------------------------
\section{\label{sec:constructions}Explicit Quantum Constructions}
%-----------------------------------------------------------------------

The negative results of the previous sections leave exactly two realizable families: $\chi^{K_4}_{1234}$ and $\chi^{D_4}_{125}$. We now exhibit explicit quantum systems realizing each.

Both constructions use $\Hil_A=\Hil_C=\C^2$ (single-qubit spaces) and the maximally entangled Bell state
\begin{equation}
|\Bell\rangle=\frac{1}{\sqrt{2}}\bigl(|00\rangle+|11\rangle\bigr)\in\C^2\otimes\C^2.
\end{equation}
We also use the four Pauli operators on $\C^2$:
\begin{align}
\sigma_0&=\id=\begin{pmatrix}1&0\\0&1\end{pmatrix},&
\sigma_1&=\sigmax=\begin{pmatrix}0&1\\1&0\end{pmatrix},\notag\\
\sigma_2&=\sigmay=\begin{pmatrix}0&-i\\i&0\end{pmatrix},&
\sigma_3&=\sigmaz=\begin{pmatrix}1&0\\0&-1\end{pmatrix}.
\end{align}
The Pauli operators satisfy $\sigma_i^2=\id$, $\sigma_i\dg=\sigma_i$, and the multiplication rule
\begin{equation}
\sigma_i\sigma_j=\delta_{ij}\,\id+i\sum_{k=1}^{3}\epsilon_{ijk}\sigma_k,\quad i,j\in\{1,2,3\},
\end{equation}
where $\epsilon_{ijk}$ is the Levi-Civita symbol.

\subsection{Common measurement setup}

\begin{construction}\label{constr:common}
Let $\Hil_A=\Hil_C=\C^2$ with shared state $|\Bell\rangle$. Alice's two measurement settings are the eigenbases of $\sigmaz$ and $\sigmax$. Charlie's two settings are the eigenbases of $(\sigmaz+\sigmax)/\sqrt{2}$ and $(\sigmaz-\sigmax)/\sqrt{2}$.
\end{construction}

We verify that this setup achieves the Tsirelson bound. The CHSH operator is
\begin{equation}
\mathcal{B}=A_0\otimes(C_0+C_1)+A_1\otimes(C_0-C_1),
\end{equation}
with Alice measuring $A_0=\sigmaz$, $A_1=\sigmax$ and Charlie measuring
\begin{align}
C_0&=\frac{\sigmaz+\sigmax}{\sqrt{2}},&
C_1&=\frac{\sigmaz-\sigmax}{\sqrt{2}}.
\end{align}
Then $C_0+C_1=\sqrt{2}\,\sigmaz$ and $C_0-C_1=\sqrt{2}\,\sigmax$, so
\begin{equation}
\mathcal{B}=\sqrt{2}\,(\sigmaz\otimes\sigmaz+\sigmax\otimes\sigmax).
\end{equation}
Direct computation gives $\langle\Bell|\sigmaz\otimes\sigmaz|\Bell\rangle=\langle\Bell|\sigmax\otimes\sigmax|\Bell\rangle=1$, so $\langle\Bell|\mathcal{B}|\Bell\rangle=2\sqrt{2}$, the Tsirelson bound~\cite{Tsirelson1980,Aspect1982}.

\subsection{\label{sec:bell}The family $\chi^{K_4}_{1234}$: Bell-basis teleportation}

\begin{construction}\label{constr:bell}
On top of Construction~\ref{constr:common}, take the bipartite measurement $\mathcal{N}=\{\Pi_k\}_{k=0}^{3}$ to be the Bell-basis measurement, where
\begin{equation}
\Pi_k:=|\Bell_k\rangle\langle\Bell_k|,\quad|\Bell_k\rangle:=(\sigma_k\otimes\id)|\Bell\rangle,\quad k\in\{0,1,2,3\}.
\end{equation}
\end{construction}

We verify that $\{|\Bell_k\rangle\}_{k=0}^3$ is an orthonormal basis of $\C^2\otimes\C^2$. Explicit computation gives
\begin{align}
|\Bell_0\rangle&=\frac{1}{\sqrt{2}}(|00\rangle+|11\rangle),\notag\\
|\Bell_1\rangle&=\frac{1}{\sqrt{2}}(|10\rangle+|01\rangle),\notag\\
|\Bell_2\rangle&=\frac{i}{\sqrt{2}}(|10\rangle-|01\rangle),\notag\\
|\Bell_3\rangle&=\frac{1}{\sqrt{2}}(|00\rangle-|11\rangle),
\end{align}
the four Bell states. For orthonormality:
\begin{align}
\langle\Bell_j|\Bell_k\rangle
  &=\langle\Bell|(\sigma_j\dg\sigma_k\otimes\id)|\Bell\rangle\notag\\
  &=\langle\Bell|(\sigma_j\sigma_k\otimes\id)|\Bell\rangle\notag\\
  &=\tfrac{1}{2}\Tr(\sigma_j\sigma_k)=\delta_{jk},
\end{align}
where we used $\sigma_j\dg=\sigma_j$, the identity $\langle\Bell|(M\otimes\id)|\Bell\rangle=\frac{1}{2}\Tr(M)$, and $\Tr(\sigma_j\sigma_k)=2\delta_{jk}$.

After measurement outcome $k$, the bipartite state is projected to $|\Bell_k\rangle$, and Alice applies the correction $\sigma_k$ to her qubit to restore the state. The map $k\mapsto\sigma_k$ defines a projective representation
\begin{equation}
U:K_4\to\mathrm{U}(\C^2),\quad U_{(k)}=\sigma_k,
\end{equation}
where $K_4\cong\Z_2\times\Z_2$ with elements $\{(0,0),(1,0),(0,1),(1,1)\}$ corresponding to $\{\sigma_0,\sigma_1,\sigma_2,\sigma_3\}$ and the group law inherited from Pauli multiplication modulo phases.

The character of this representation:
\begin{align}
\chi_U(e)&=\Tr(\sigma_0)=2, & \chi_U(a)&=\Tr(\sigma_1)=0,\notag\\
\chi_U(b)&=\Tr(\sigma_2)=0, & \chi_U(ab)&=\Tr(\sigma_3)=0.
\end{align}
By Lemma~\ref{lem:conj}, $\chconj=(4,0,0,0)$.

To decompose $\chconj$ into irreducibles of $K_4$: each irreducible character of $K_4$ equals $1$ at $e$, and the inner product
\begin{equation}
m_i=\tfrac{1}{4}\bigl[1\cdot 1\cdot 4+0+0+0\bigr]=1
\end{equation}
gives $m_i=1$ for all $i$, so
\begin{equation}
\chconj=\chi^{K_4}_1+\chi^{K_4}_2+\chi^{K_4}_3+\chi^{K_4}_4=\chi^{K_4}_{1234}.
\end{equation}
This realizes the family $\chi^{K_4}_{1234}$. To make the group structure explicit: the (single-qubit) Pauli group is
\begin{equation}
\mathcal{P}_1:=\langle i\id,\sigmax,\sigmay,\sigmaz\rangle=\{\,i^k\sigma_j\,:\,k\in\{0,1,2,3\},\;j\in\{0,1,2,3\}\,\},
\end{equation}
of order $16$, with center $Z(\mathcal{P}_1)=\langle i\id\rangle=\{\pm\id,\pm i\id\}$ of order $4$. The four operators $\{\sigma_0,\sigma_1,\sigma_2,\sigma_3\}$ form a complete set of coset representatives for $\mathcal{P}_1/Z(\mathcal{P}_1)\cong K_4$. Conjugation by an element of $\mathcal{P}_1$ depends only on its coset (since central phases cancel), so $\sigma_k\mapsto\Mconj_{\sigma_k}$ is a well-defined representation of $K_4$ on $\Lin(\C^2)$ whose character is $\chi^{K_4}_{1234}$. Equivalently, this conjugation representation lifts to the linear representation $\chi_5$ of $D_4$ via the descent $D_4\twoheadrightarrow D_4/\langle r^2\rangle\cong K_4$, which is the form used in Proposition~\ref{prop:trivial}.

\subsection{\label{sec:povm}The family $\chi^{D_4}_{125}$: a POVM construction}

The phase operator is
\begin{equation}
S:=\begin{pmatrix}1&0\\0&i\end{pmatrix},
\end{equation}
with powers $S^2=\sigmaz$, $S^3=S^{-1}=\mathrm{diag}(1,-i)$, and $S^4=\id$. Together with $\sigma_x$, $S$ generates a finite subgroup of $\mathrm{U}(2)$. Direct computation gives
\begin{equation}
\sigmax S\sigmax=\begin{pmatrix}i&0\\0&1\end{pmatrix}=iS^{-1}=iS^3.
\end{equation}
Thus the relation $srs^{-1}=r^{-1}$ of $D_4$ is satisfied only up to the phase $i$:
\begin{equation}
(\sigmax)\,S\,(\sigmax)\,S=iS^{-1}\cdot S=i\cdot\id\ne\id.
\end{equation}
This non-trivial cocycle shows that the assignment $r\mapsto S$, $s\mapsto\sigmax$ defines a projective representation of $D_4$ on $\C^2$ in the non-trivial cohomology class. The associated lifted character on the Schur cover $D_8$ has $|\chi_U|^2=|\chi_{E_1}|^2$, so the conjugation character matches that of Proposition~\ref{prop:nontrivial}.

\begin{construction}\label{constr:povm}
On top of Construction~\ref{constr:common}, define for $k\in\{0,1,2,3\}$:
\begin{align}
|b_k\rangle&:=(\sigma_k\otimes\id)|\Bell\rangle,\\
|a_k\rangle&:=(S\sigma_k\otimes\id)|\Bell\rangle.
\end{align}
The bipartite measurement is the eight-outcome POVM
\begin{equation}
\mathcal{N}:=\Bigl\{\tfrac{1}{2}|b_k\rangle\langle b_k|\Bigr\}_{k=0}^{3}
            \cup\Bigl\{\tfrac{1}{2}|a_k\rangle\langle a_k|\Bigr\}_{k=0}^{3}.
\end{equation}
\end{construction}

We verify that $\mathcal{N}$ is a valid POVM. Since $\{|\Bell_k\rangle\}_{k=0}^3$ is an orthonormal basis,
\begin{align}
\sum_{k=0}^3|b_k\rangle\langle b_k|
  &=\sum_{k=0}^3|\Bell_k\rangle\langle\Bell_k|=\id\otimes\id.
\end{align}
For the second family, the operators $\{S\sigma_k\}_{k=0}^3$ are also orthonormal in $M_2(\C)$ with respect to $\frac{1}{\sqrt{2}}\langle\cdot,\cdot\rangle_{\mathrm{HS}}$, since
\begin{equation}
\Tr((S\sigma_j)\dg S\sigma_k)=\Tr(\sigma_j S\dg S\sigma_k)=\Tr(\sigma_j\sigma_k)=2\delta_{jk}.
\end{equation}
Therefore $\{(S\sigma_k\otimes\id)|\Bell\rangle\}_{k=0}^3$ is also an orthonormal basis of $\C^2\otimes\C^2$, and
\begin{equation}
\sum_{k=0}^3|a_k\rangle\langle a_k|=\id\otimes\id.
\end{equation}
Combining,
\begin{equation}
\sum_{k=0}^3\frac{1}{2}|b_k\rangle\langle b_k|+\sum_{k=0}^3\frac{1}{2}|a_k\rangle\langle a_k|=\id\otimes\id,
\end{equation}
so $\mathcal{N}$ is a valid POVM.

\medskip\noindent\textit{Operational protocol: Kraus operators, conditional states, corrections.}

A POVM by itself does not specify the post-measurement transformation; for an entanglement-swapping protocol we must fix an instrument (Kraus decomposition) and verify that the conditional state on the external systems is correctable by a $D_4$-valued correction. We use the standard L\"uders rule and take the rank-one Kraus operators
\begin{align}
M_k^{(b)}&:=\sqrt{\tfrac{1}{2}|b_k\rangle\langle b_k|}=\tfrac{1}{\sqrt{2}}|b_k\rangle\langle b_k|,\\
M_k^{(a)}&:=\sqrt{\tfrac{1}{2}|a_k\rangle\langle a_k|}=\tfrac{1}{\sqrt{2}}|a_k\rangle\langle a_k|,
\end{align}
which satisfy $(M_k^{(b)})\dg M_k^{(b)}=\tfrac{1}{2}|b_k\rangle\langle b_k|$ and the analogue for $M_k^{(a)}$, so they implement $\mathcal{N}$ and $\sum_{\bullet,k}(M_k^{(\bullet)})\dg M_k^{(\bullet)}=\id$.

Consider the entanglement-swapping configuration in which Alice and Bob share $|\Bell\rangle_{A_1B_1}$ and Bob and Charlie share $|\Bell\rangle_{B_2C_1}$. Bob measures the joint system $B_1B_2$ using the instrument above. By the standard entanglement-swapping identity, a projection of $B_1B_2$ onto $(A\otimes\id)|\Bell\rangle_{B_1B_2}$ yields the unnormalized conditional state $(\id\otimes A\dg)|\Bell\rangle_{A_1C_1}$ on the remaining systems. Therefore the outcomes give
\begin{align}
\text{outcome }(b,k):\quad
  &(\id\otimes\sigma_k\dg)|\Bell\rangle_{A_1C_1}
   =(\id\otimes\sigma_k)|\Bell\rangle_{A_1C_1},\\
\text{outcome }(a,k):\quad
  &(\id\otimes(S\sigma_k)\dg)|\Bell\rangle_{A_1C_1},
\end{align}
each occurring with probability $\tfrac{1}{8}$. Charlie applies the inverse of the operator appearing on his side:
\begin{equation}
V_{(b,k)}=(\sigma_k\dg)\dg=\sigma_k,
\qquad
V_{(a,k)}=((S\sigma_k)\dg)\dg=S\sigma_k,
\end{equation}
restoring $|\Bell\rangle_{A_1C_1}$ exactly. The eight corrections
\begin{equation}
\{\sigma_k\}_{k=0}^3\cup\{S\sigma_k\}_{k=0}^3\;\subset\;\mathrm{U}(2)
\end{equation}
generate a finite subgroup of $\mathrm{U}(2)$ that, modulo the central phase $\langle i\id\rangle$, is isomorphic to $D_4$, with $r\mapsto[S]$ and $s\mapsto[\sigmax]$ in the projective quotient. Since each correction $V$ acts unitarily on a single qubit, the bipartite state $(\id\otimes V)(\id\otimes V\dg)|\Bell\rangle_{A_1C_1}=|\Bell\rangle_{A_1C_1}$ is preserved exactly, and consequently the CHSH expectation value $\langle\Bell|\mathcal{B}|\Bell\rangle=2\sqrt{2}$ from Construction~\ref{constr:common} is recovered after the correction. This establishes the operational protocol.

\medskip\noindent\textit{Why a rank-one PVM in dimension $4$ is insufficient.}

In the minimal setting where Bob's bipartite system is $\Hil_{B_1}\otimes\Hil_{B_2}=\C^2\otimes\C^2$, a rank-one projective-valued measurement (PVM) consists of $\dim(\C^2\otimes\C^2)=4$ orthogonal rank-one projectors, hence has exactly $4$ outcomes. A faithful realization of $\chi^{D_4}_{125}$ requires a faithful projective representation $D_4\to\mathrm{PU}(\C^2)$, whose image has order $|D_4|=8$. With only $4$ measurement outcomes, the correction map $\text{outcomes}\to\mathrm{PU}(\C^2)$ has image of cardinality at most $4$, which is insufficient to generate a faithful $D_4$ projective representation. The conclusion is that a rank-one PVM in $\C^2\otimes\C^2$ does not suffice; one needs either (i) a POVM with at least $|D_4|=8$ outcomes, as in Construction~\ref{constr:povm}, or (ii) a Naimark dilation to a larger Hilbert space and a rank-one PVM there. The POVM and dilated-PVM pictures give equivalent operational data, both subsumed by the instrument formalism. Higher-rank PVMs and coarse-grained instruments do not evade this counting: they can only reduce the number of distinguishable outcomes, never increase it.

By Proposition~\ref{prop:nontrivial},
\begin{equation}
\chconj=\chi_1+\chi_2+\chi_5=\chi^{D_4}_{125}.
\end{equation}

%-----------------------------------------------------------------------
\section{\label{sec:final}Conclusion}
%-----------------------------------------------------------------------

We collect the results of the preceding sections into the main theorem. Among the seven families of GPTs stable under teleportation classified in Ref.~\cite{DmelloGross2026}, exactly two admit a quantum realization:
\begin{equation}
\chi^{K_4}_{1234}\quad\text{and}\quad\chi^{D_4}_{125}.
\end{equation}
These are realized by Construction~\ref{constr:bell} and Construction~\ref{constr:povm}, respectively. The five remaining families
\begin{equation}
\chi^{\Z_4}_{1234},\;\chi^{D_4}_{135},\;\chi^{D_4}_{145},\;\chi^{D_4}_{12345},\;\chi^{D_4}_{123452}
\end{equation}
admit no quantum realization.

\begin{acknowledgments}
Miguel A.\ A.\ Lisboa thanks Lionel J.\ Dmello for valuable conversations and discussions, David Gross for proposing the problem addressed in this work, and Venturus for institutional support.
\end{acknowledgments}

\bibliography{apssamp}

\end{document}